\newif\ifproof
\newcommand{\calX}{\mathcal{X}}
\newcommand{\calY}{\mathcal{Y}}
\newcommand{\mimicGenie}{\mathrm{mimicGenie}}
\newcommand{\vecd}{\mathbf{d}}
\newcommand{\vecg}{\mathbf{g}}
\newcommand{\vecx}{\mathbf{x}}
\newcommand{\vecX}{\mathbf{X}}
\newcommand{\vecy}{\mathbf{y}}
\newcommand{\vecY}{\mathbf{Y}}
\newcommand{\vecz}{\mathbf{z}}
\newcommand{\vecZ}{\mathbf{Z}}
\newcommand{\xI}{\vecx_{\mathrm{I}}}
\newcommand{\xII}{\vecx_{\mathrm{II}}}
\newcommand{\zI}{\vecz_{\mathrm{I}}}
\newcommand{\zII}{\vecz_{\mathrm{II}}}
\newcommand{\concat}{\odot}
\newtheorem{theo}{Theorem}
\newtheorem{lemm}[theo]{Lemma}
\newcommand{\floor}[1]{\lfloor #1 \rfloor}
\newcommand{\eqann}[2][=]{\overset{\mathclap{(\text{#2})}}{#1}} 
\newcommand{\eqannref}[1]{$(\text{#1})$}
\newcommand{\errh}[1][h]{\mathrm{err}_{#1}}
\newcommand{\emptystring}{\epsilon}
\newcommand{\Ppadleft}{P_{\mathrm{padLeft}}}
\newcommand{\Ppadright}{P_{\mathrm{padRight}}}
\newcommand{\vecymiddle}{\vecy_{\mathrm{middle}}}
\newcommand{\vecyleft}{\vecy_{\mathrm{left}}}
\newcommand{\vecyright}{\vecy_{\mathrm{right}}}
\newcommand{\vecYleft}{\vecY_{\mathrm{left}}}
\newcommand{\vecYright}{\vecY_{\mathrm{right}}}
\newcommand{\vecYmiddle}{\vecY_{\mathrm{middle}}}
\newcommand{\veczero}{\mathbf{0}}
\newcommand{\vecone}{\mathbf{1}}
\newcommand{\pdel}{{p_\text{d}}}
\newcommand{\pins}{{p_\text{i}}}
\newcommand{\psub}{{p_\text{s}}}
\newcommand{\twobibs}[2]{#2} 
\begin{document}
\title{Polar Codes for Channels with \\ Insertions, Deletions, and Substitutions}
\date{}
\author{%
    \IEEEauthorblockN{Henry D. Pfister} \IEEEauthorblockA{Duke University} \and
    \IEEEauthorblockN{Ido Tal} \IEEEauthorblockA{Technion} 

}

\maketitle
\begin{abstract}
    This paper presents a coding scheme for an insertion deletion substitution
     channel. We extend a previous scheme for the deletion channel where polar codes are modified by adding ``guard bands'' between segments. In the new scheme, each guard band is comprised of a middle segment of `1' symbols, and left and right segments of `0' symbols. Our coding scheme allows for a regular hidden-Markov input distribution, and achieves the information rate between the input and corresponding output of such a distribution. Thus, we prove that our scheme can be used to efficiently achieve the capacity of the channel. The probability of error of our scheme decays exponentially in the cube-root of the block length.

\end{abstract}

\section{Introduction}
In many communications systems, symbol-timing errors can result in insertion and deletion errors.
For example, the insertion deletion substitution (IDS) channel maps a length-$N$ input string to a finite output string by sampling an i.i.d.\ output process for each input that selects between insertion, deletion, and substitution.
These types of channels were first studied in the 1960s~\cite{Gallager_1961,Dobrushin_1967} and modern coding techniques were first applied to them in~\cite{Davey_2001}.
Over the past 15 years, bounds on the capacity of synchronization error channels have been significantly improved~\cite{Mitzenmacher_2009,Kanoria_2013,Rahmati_2015,Castiglione_2015,Cheraghchi_2019}.

In~\cite{Tal-isit19,TPFV:20a}, a capacity-achieving coding scheme is introduced for the deletion channel based on polar codes.
The construction and proof builds upon many earlier results (e.g., \cite{Wang_2014,Wang_2015,Thomas_2017,Tian_2017,Tian_2018,Tian_it2018,Li_2019,SasogluTal:18a,ShuvalTal:18a}); see~\cite{TPFV:20a} for a detailed description of these connections.

The construction in~\cite{Tal-isit19,TPFV:20a} is based on generating codewords consisting of smaller blocks separated by guard bands.
After reception, the overall output sequence is separated into blocks associated with the smaller input blocks.
However, the separation process changes the effective channel experienced by the small blocks.
In particular, the guard bands are long blocks of zeros and the separation process removes all zeros on either side of the small block.
The analysis in~\cite{Tal-isit19,TPFV:20a} shows that the resulting channel, dubbed the trimmed deletion channel (TDC), polarizes weakly and has the same mutual information rate as the original deletion channel.
Due to the possibly unbounded memory in the deletion channel, the standard extension~\cite{Shuval_Tal_Memory_2017} to strong polarization does not work.
Instead, strong polarization can be shown for the polar combining of these small blocks due to the independence provided by the guard bands.
These elements complete the achievability proof for the deletion channel.

In this paper, we apply roughly the same coding scheme to the IDS channel.
The main difference is that separating the overall output sequence into smaller blocks is more challenging.
For the deletion channel, an input that only contains zeros always gives an output that only contains zeros.
Thus, the separation process consists of parsing into small blocks and removing zeros from the edges.
For the IDS channel, an input only containing zeros typically gives an output containing both zeros and ones.
Fortunately, the expected fraction of zeros will be noticeably larger than the fraction of ones.
This observation along with a more complicated parsing process can be used to separate the overall output sequence into blocks associated with the smaller input blocks.

The key challenge is designing the parsing process so that the effective channel experienced by the small block can be analyzed.
In particular, our parsing process produces segments that can be seen roughly as the IDS output of an input consisting of a prefix of zeros, the original input data, and a suffix of zeros.
For all the small output blocks, the prefix and suffix lengths are i.i.d.\ random variables with a known distribution.
We refer to the resulting channel as the dirty zero-padded (DZP) IDS channel.
To establish the coding theorem for the IDS channel, we must show three things.
First, that our parsing of the IDS channel output gives small blocks whose joint input-output distribution matches that of the DZP channel.
Second, that the DZP channel polarizes weakly and has the same mutual information rate as the original IDS channel.
Third, that the trellis representation of the joint input-output distribution of the IDS channel~\cite{Davey_2001} can be modified to give the joint input-output distribution of the DZP channel.
In this work, we establish these three elements and describe the first two elements herein.

By combining the parsing process described in this paper with the results of~\cite{Tal-isit19,TPFV:20a}, one gets the following theorem. Due to space limitations, many details are deferred to the extended version of this paper.
\begin{theo}\label{theo:main}
    Fix a regular hidden-Markov input process and a parameter $\nu \in (0,1/3]$. The rate of our coding scheme approaches the mutual information rate between the input process and the binary IDS channel output. The encoding and decoding complexities are $O(\Lambda \log \Lambda)$ and $O(\Lambda^{1+3\nu})$, respectively, where $\Lambda$ is the blocklength. For any $0 < \nu' < \nu$ and sufficiently large blocklength $\Lambda$, the probability of decoding error is at most $2^{-\Lambda^{\nu'}}$.
\end{theo}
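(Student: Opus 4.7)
The plan is to extend the polar-coding framework of~\cite{Tal-isit19,TPFV:20a} from the pure deletion channel to the IDS channel, with the DZP channel replacing the role of the TDC. The encoder would arrange $\blockCount$ small blocks of length $N$ interleaved with guard bands of structure $0\cdots 0\,1\cdots 1\,0\cdots 0$, and apply an \Arikan{} transform both within each small block and across blocks. Source samples are drawn from the specified regular hidden-Markov process; the frozen set is chosen based on the Bhattacharyya parameters of the DZP channel, computed from its trellis representation by Monte-Carlo simulation as in~\cite{TPFV:20a}.

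Given this structure, the proof would break into the three ingredients flagged in the introduction. First, I would show that applying the parser to the IDS output produces per-block sub-outputs whose joint distribution with the corresponding input exactly matches the DZP channel. The argument hinges on concentration inside the long guard bands: the middle run of $1$'s appears at the output as a distinguishable run of $1$'s separating two regions dominated by $0$'s, because the expected fraction of $1$'s inside a $0$-input segment is strictly smaller than that inside a $1$-input segment. A careful parsing rule, combined with a tail bound on the number of insertions, deletions, and substitutions per guard band, locates the block boundaries with probability $1-o(1)$, and the leftover $0$'s on the two sides of each parsed small block form the stochastic dirty pads of the DZP channel, whose joint law is computable from the guard-band lengths and IDS parameters.

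Second, I would establish weak polarization and equality of mutual-information rates for the DZP channel. Weak polarization is standard for finite-state binary-input channels with a trellis description, so this part reduces to constructing the DZP trellis -- the third task flagged in the paper and deferred to the extended version. Equality of mutual-information rates then reduces to showing that the extra zeros in the dirty pads, together with any parsing-mismatch events, contribute vanishing information as the pad and block lengths grow. To upgrade to strong polarization -- needed for the $2^{-\Lambda^{\nu'}}$ error bound -- I would follow the approach of~\cite{Shuval_Tal_Memory_2017,TPFV:20a}: the guard bands make the per-block DZP channels conditionally independent given their inputs, so the outer \Arikan{} transform across blocks acts on a memoryless vector channel and standard strong-polarization machinery applies. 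The encoding complexity $O(\Lambda \log \Lambda)$ follows from the two \Arikan{} transforms; the decoding complexity $O(\Lambda^{1+3\nu})$ follows from running a BCJR-style forward-backward pass on the DZP trellis of each small block -- whose size grows like $N^{1+3\nu}$ owing to the trellis-state bounds controlled by $\nu$ -- and combining per-block posteriors through an outer successive-cancellation decoder.

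The main obstacle I expect is the parsing analysis of the first ingredient. Distinguishing the middle $1$'s-run of a guard band from spurious $1$'s produced by substitutions and insertions inside $0$-segments must be quantified with exponential concentration that survives a union bound over all $\blockCount = \Theta(\Lambda/N)$ guard bands; and the DZP prefix/suffix lengths must turn out to be i.i.d.\ with a tractable law, so that the DZP channel remains a fixed, small finite-state system independent of the block index. Tightening these estimates so that the resulting parsing-error exponent matches the cube-root exponent $\nu \le 1/3$ in the final error bound, while simultaneously keeping the residual DZP channel information-theoretically equivalent to the original IDS channel, is the most delicate step.
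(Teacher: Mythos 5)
Your proposal correctly identifies the three ingredients the paper flags (parsing the IDS output into DZP blocks, weak polarization and rate equality for the DZP channel, trellis representation) and your handling of the second and third is consistent with the paper. But the first ingredient has a genuine gap, and you put your finger on it without resolving it. You propose a deterministic parsing rule that ``locates the block boundaries with probability $1-o(1)$,'' and then assert that the leftover zeros on each side ``form the stochastic dirty pads of the DZP channel, whose joint law is computable.'' This cannot be literally true: if the cut points are chosen as a deterministic function of the entire received string, then the content of the pads is correlated with $\vecy(i)$ and $\vecx(i)$, so the pads are not independent of the block data and their length distribution is not block-index-free. That is exactly the obstruction that would make the DZP analysis inapplicable, and you flag it as ``the most delicate step'' but offer no mechanism to fix it.

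The paper's solution is a two-decoder construction that you do not describe. A genie, who knows the true comma positions, does \emph{not} simply hand over the IDS sub-outputs; instead it produces the pads by a randomized procedure that places a length-$h$ window at a known position inside the guard-band response, adds a uniform random dither $\rho$, and slides by frames while counting zeros. By design this procedure examines only the guard-band portion of the output (never $\vecy(i)$), so the resulting pads are exactly independent of $\vecx(i)$ and $\vecy(i)$ with an $i$-independent law, which is what makes $\Ppadleft, \Ppadright$ and hence the DZP channel well-defined. Aladdin, the real decoder, runs the mirror-image dithered-window procedure by recursively halving the raw received string (not by hunting for the middle run of $1$'s as you suggest), and the crucial step is \emph{coupling} Aladdin's dithers to the genie's so that, with probability governed by Lemma~\ref{lemm:windowAdversary}, the two decoders cut in the same place. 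The error probability then splits cleanly into the genie's polar-decoding error (analyzable exactly over the DZP channel) plus the mismatch probability, instead of requiring an approximate-DZP argument. Without the dither and the genie/Aladdin coupling, the step from ``parsing succeeds w.h.p.'' to ``per-block channel is DZP'' does not go through.
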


The structure of this paper is as follows. In Section~\ref{sec:channels} we define the IDS channel, and also a close variant which we term the ``dirty zero padding IDS channel'' (DZP). Section~\ref{sec:encoding} details how encoding is done. In Section~\ref{sec:decoding}, we define two decoding methods. Namely, we first define a decoding method executed by a genie, which is in possession of some extra information (it knows where the ``commas'' which separate the outputs corresponding to certain input blocks are). The utility of the genie's decoding method is that it is easy to analyze (the DZP channel is used in the analysis). We then a define second decoding method: Aladdin's decoding method. Since Aladdin is a mere mortal, he does not have knowledge of where the above commas lie. That is, Aladdin's method is the one we can actually implement. The main trick is to show that with very high probability, the genie's decoder and Aladdin's decoder produce the exact same result.

\section{Channel models}
\label{sec:channels}

In this section, we define the IDS and DZP channels.

\subsection{Dobrushin's Channel and the IDS Channel}
\label{subsec:idsChannel}

In 1967, Dobrushin introduced a general class of channels with synchronization errors and proved a random coding theorem for that class~\cite{Dobrushin_1967}.
The model consists of a finite input alphabet $\mathcal{X}$ and a conditional distribution $p_{Y|X} (\cdot|x)$ over finite output strings $\mathcal{Y} \subseteq  \calX^* = \cup_{n=0}^\infty \mathcal{X}^n$ given $x\in \mathcal{X}$, where $\emptystring$ denotes the empty string of length $0$.
For the input $\vecx = (x_1,x_2,\ldots,x_N) \in \mathcal{X}^N$, the channel output is generated by drawing $Y_n \sim p_{Y|X}(\cdot|x_n)$ i.i.d.\ and concatenating to get
$$ \vecY = Y_1 \concat Y_2 \concat \cdots \concat Y_N .$$

For example, the binary deletion channel with deletion probability $\pdel$ has $\mathcal{X} = \{0,1\}$ and $\mathcal{Y} = \{\emptystring,0,1\}$ with non-zero probabilities $P_{Y|X}(\emptystring|x)=\pdel$ and $P_{Y|X}(x|x)=1-\pdel$ for all $x\in \mathcal{X}$.
Similarly, the binary IDS channel we consider has IDS probabilities $(\pins, \pdel, \psub)$, $\calX = \{0,1\}$,
and $\calY = \left\{ \emptystring, 0,1, 00, 01, 10, 11 \right\}$ with non-zero probabilities
$P_{Y|X}(\emptystring|x)=\pdel$,
$P_{Y|X}(x|x)=1-\pins-\pdel-\psub$,
$P_{Y|X}(\overline{x}|x)=\psub$,
$P_{Y|X}(0x|x) = P_{Y|X}(1x|x)=\pins/2$ for all $x\in \mathcal{X}$.



While we focus on this binary IDS channel for concreteness, the approach described here should generalize to any well-behaved binary-input Dobrushin channel for which the output distribution associated with the all-zero input is distinguishable from finite shifts of the output distribution associated with the all-one input.
For simplicity, we focus on the case where they are distinguishable simply by counting ones and zeros.

Define $\alpha_{0|x}$ ($\alpha_{1|x}$) as the expected number of $0$ ($1$) symbols at the output of the channel, given that the input was $x \in \calX$. Note that the expected length of an output, given that the input was $x \in \calX$ is $\alpha_{0|x} + \alpha_{1|x}$. We require that this sum is independent of $x$, and denote it as
\begin{equation}
    \label{eq:beta}
    \beta = \alpha_{0|0} + \alpha_{1|0} = \alpha_{0|1} + \alpha_{1|1} \; .
\end{equation}

We also require an ``advantage'' to $x$ at the output, if the input was $x$. That is, we require that
\begin{equation}
    \label{eq:advantageToInput}
    \alpha_{0|0} > \alpha_{1|0} \qquad \mbox{and} \qquad \alpha_{1|1} > \alpha_{0|1}  \; ,
\end{equation}
and denote
\begin{equation}
    \label{eq:gamma}
    \gamma \triangleq \frac{\min\{ \alpha_{0|0} - \alpha_{1|0}, \alpha_{1|1} - \alpha_{0|1} \}}{2} > 0 \; ,
\end{equation}
where the inequality follows by (\ref{eq:advantageToInput}).

Informally, the above ``advantage to the input at the output'' will allow us to differentiate between a long input of $0$ symbols and a long input of $1$ symbols. Specifically, fix a window length $h>0$ and an $x\in \mathcal{X}$.
Then, generate an output sequence of length at least $h+1$ and optionally remove the first output bit.
Then, we count the number of $0$ symbols contained in the first $h$  positions of the string. If it is at least $h/2$, then we declare that $x=0$; otherwise, we declare that $x=1$. The following lemma states that we have a very high chance of guessing correctly, for large enough window length $h$.


\begin{lemm}
    \label{lemm:windowAdversary}
    Let $x \in \calX$ be fixed, and let a window length $h \geq h_0$ be given, where $h_0$ is a constant dependent on the channel. Let $\vecY$ be a string of length $h$ generated by truncating the output associated with the all-$x$ input where the first output bit is optionally removed. Then, the probability that $\vecY$ contains fewer than $h/2$ bits equal to $x$ is less than
    \[
        \errh \triangleq e^{-h \cdot c_0} \; ,
    \]
    where $c_0$ is a positive constant dependent on the channel.
\end{lemm}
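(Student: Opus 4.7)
The plan is to couple $\vecY$ to the output of $N$ consecutive all-$x$ inputs for an $N$ chosen linearly in $h$, and then apply two Hoeffding bounds. Each of the $N$ independent input symbols produces an i.i.d.\ output chunk $Z_i \sim p_{Y|X}(\cdot \mid x)$ of length in $\{0,1,2\}$. Let $S_N = \sum_{i=1}^N |Z_i|$ be the total output length, and let $T_N$ be the total number of bits in $Z_1 \concat \cdots \concat Z_N$ that differ from $x$. Each is a sum of $N$ i.i.d.\ bounded random variables, with means $\mathbb{E}[S_N] = N\beta$ and $\mathbb{E}[T_N] = N \alpha_{\bar{x}|x}$, where $\alpha_{\bar{x}|x}$ abbreviates $\alpha_{1|0}$ when $x=0$ and $\alpha_{0|1}$ when $x=1$.

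Fix $\epsilon = \gamma/2 > 0$ and take $N = \lceil (h+1)/(\beta - \epsilon) \rceil$. Hoeffding's inequality yields
\[
\Pr\bigl(S_N < N(\beta - \epsilon)\bigr) \leq e^{-c_1 N}, \qquad \Pr\bigl(T_N > N(\alpha_{\bar{x}|x} + \epsilon)\bigr) \leq e^{-c_2 N},
\]
for constants $c_1, c_2 > 0$ depending only on the channel. On the first good event, at least $h+1$ output bits are produced, so $\vecY$, being a length-$h$ substring obtained by optionally dropping the first bit and truncating, sits inside the first $h+1$ output bits. Its count of non-$x$ bits is therefore at most $T_N$, which on the second good event is at most $N(\alpha_{\bar{x}|x} + \epsilon)$.

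The crux is to show that this last quantity is bounded by $h/2$, since then $\vecY$ contains at least $h - h/2 = h/2$ bits equal to $x$. Substituting the definition of $N$ and absorbing $O(1/h)$ correction terms, the requirement reduces to $3\epsilon \leq \alpha_{x|x} - \alpha_{\bar{x}|x}$; by (\ref{eq:gamma}) the right-hand side is at least $2\gamma$, so $\epsilon = \gamma/2$ works for every $h \geq h_0$, with $h_0$ a channel-dependent constant. This step is where the advantage assumption (\ref{eq:advantageToInput}) is essential, and it is the main (non-routine) element of the proof: without the strict advantage, the two Hoeffding budgets would be incompatible.

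Finally, a union bound over the two bad events gives a failure probability at most $2 e^{-\min(c_1, c_2) N}$. Since $N$ grows linearly in $h$, this is upper-bounded by $e^{-c_0 h}$ for an appropriate channel-dependent constant $c_0 > 0$, matching the stated bound $\errh = e^{-h \cdot c_0}$.
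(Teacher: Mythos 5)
Your proof is correct and follows essentially the same approach as the paper: both split the output into i.i.d.\ per-input chunks, apply two Hoeffding bounds (one on the cumulative output length of a $\Theta(h)$-size block of inputs, one on its symbol composition), use the advantage condition~(\ref{eq:advantageToInput}) via~$\gamma$ to make the two concentration budgets compatible, and finish with a union bound. The only superficial difference is the direction of the window: you take $N$ slightly large so that the coupled output of $N$ chunks contains $\vecY$ and bound the non-$x$ count of $\vecY$ from \emph{above} by $T_N$, whereas the paper takes $k$ slightly small so that the ``essential output'' of $k$ chunks is contained \emph{inside} $\vecY$ and bounds $\vecY$'s $x$-count from \emph{below}; the two are duals and yield different but equally valid constants.
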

\begin{IEEEproof}
See Appendix.
\end{IEEEproof}

\subsection{Dirty-Zero-Padding IDS channel}
The DZP channel $W^\star$ is defined by the IDS channel $W$, the input blocklength $N_0$, and two probability distributions over $\mathcal{X}^*$, $\Ppadleft$ and $\Ppadright$. Given the length-$N_0$ input string $\vecx$, we first pass $\vecx$ through the IDS channel $W$ and let $\vecymiddle$ denote the output. Next, we draw two independent vectors $\vecyleft$ and $\vecyright$ according to the probability distributions $\Ppadleft$ and $\Ppadright$, respectively. The output of the DZP channel is then given by
\begin{equation}
    \label{eq:ystar_yleftMiddleRight}
    \vecy^\star = \vecyleft \concat \vecymiddle \concat \vecyright \; .
\end{equation}

We will specify $\Ppadleft$ and $\Ppadright$ later. For now, let us say informally that $\vecyleft$ and $\vecyright$ are the result of passing strings of `0' symbols through the channel $W$. Hence the name: we pad $\vecy$ from the left and right by vectors corresponding to zeros ``dirtied'' by passing through the channel $W$.

Informally, the following lemma states that, in the limit  as $N_0 \to \infty$, the mutual information rates of $W$ and $W^\star$ are equal. As will become apparent later, the maximum possible length of $\vecyleft$ and the maximum possible length of $\vecyright$ both grow sub-linearly in $N_0$. Hence, the condition of the lemma is not vacant.

\begin{lemm}
\label{eq:IDS_DZP_mutualInformation}
    Let $\vecX \in \calX^{N_0}$ be a random vector of length $N_0$. Let $\vecY$ and $\vecY^\star$ be the outputs gotten when $\vecX$ is input to the IDS channel $W$ and the DZP channel $W^\star$, respectively. Let $N_0$ be large enough so that the maximum length that $\vecyleft$ can take and the maximum length that $\vecyright$ can take are both at most $N_0-1$. Then,
    \[
        \frac{I(\vecX;\vecY)- 2 \log_2 N_0}{N_0}  \leq \frac{I(\vecX;\vecY^\star)}{N_0} \leq \frac{I(\vecX;\vecY)}{N_0}\; .
    \]
\end{lemm}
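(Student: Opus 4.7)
The plan is to prove the two inequalities separately; neither requires any channel-specific structure beyond the definition of the DZP channel, and both are short.

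For the upper bound $I(\vecX;\vecY^\star) \leq I(\vecX;\vecY)$, I would note that by construction $\vecymiddle$ has the same joint distribution with $\vecX$ as $\vecY$, and that $\vecY^\star = \vecyleft \concat \vecymiddle \concat \vecyright$, where the pair $(\vecyleft,\vecyright)$ is drawn from $\Ppadleft \times \Ppadright$ independently of $(\vecX,\vecY)$. Hence $\vecX \to \vecY \to \vecY^\star$ is a Markov chain, and the data-processing inequality yields the claim.

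For the lower bound, the idea is to show that $\vecY^\star$ together with a little integer-valued side information contains all of $\vecY$. Let $L$ and $R$ denote the (random) lengths of $\vecyleft$ and $\vecyright$, respectively. Given $(\vecY^\star,L,R)$, one recovers $\vecY$ deterministically by discarding the first $L$ and last $R$ symbols of $\vecY^\star$, so $I(\vecX;\vecY) \leq I(\vecX;\vecY^\star,L,R)$. The chain rule and the trivial bound $I(\vecX;L,R\mid\vecY^\star)\leq H(L,R)\leq H(L)+H(R)$ give
\[
I(\vecX;\vecY) \;\leq\; I(\vecX;\vecY^\star) + H(L) + H(R).
\]
The hypothesis on $N_0$ implies that $L$ and $R$ are nonnegative integers taking at most $N_0$ distinct values each, so $H(L) \leq \log_2 N_0$ and $H(R) \leq \log_2 N_0$. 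Dividing the resulting inequality $I(\vecX;\vecY) \leq I(\vecX;\vecY^\star) + 2\log_2 N_0$ by $N_0$ produces the lower bound.

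There is no real obstacle: the whole proof reduces to one application of data processing and one ``side information costs at most its entropy'' step. The only substantive choice is to identify $(L,R)$ as the auxiliary pair that converts $\vecY^\star$ back into $\vecY$; the hypothesis on $N_0$ is exactly what is needed to bound their joint entropy and yield the additive $2\log_2 N_0$ gap. As a sanity check, since this gap is sublinear in $N_0$, the two mutual-information rates coincide in the limit $N_0\to\infty$, matching the informal statement preceding the lemma.
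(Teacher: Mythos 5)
Your proof is correct and is essentially identical to the paper's: the paper also applies data processing in both directions, identifying $(|\vecYleft|,|\vecYright|)$ (your $(L,R)$) as the side information that recovers $\vecY$ from $\vecY^\star$, and bounds $H(|\vecYleft|)+H(|\vecYright|)\le 2\log_2 N_0$ by the same counting argument.
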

\begin{proof}
    It suffices to prove the inequalities for the numerators since the denominators all equal $N_0$.
          The inequality $I(\vecX;\vecY^\star) \leq I(\vecX;\vecY)$ follows by the data-processing inequality, since $\vecX$, $\vecY$, and $\vecY^\star$ form a Markov chain, in that order. We will show that
          \begin{equation} \label{eq:dzp_info_bound}
               I(\vecX;\vecY)- 2 \log_2 N_0 \leq I(\vecX;\vecY^\star) \; .
          \end{equation}
          Let us first denote $\vecY^\star = \vecYleft \concat \vecYmiddle \concat \vecYright$, as described above. Next, note that we can assume w.l.o.g.\ that $\vecY = \vecYmiddle$. Finally, note that $\vecX$, $(\vecY^\star, |\vecYleft|, |\vecYright|)$, and $\vecY$ form a Markov chain, in the order, where $|\cdot|$ denotes the length of a string. Thus,
          \begin{IEEEeqnarray*}{rCl}
              I(\vecX;\vecY)  &\leq& I(\vecX;\vecY^\star,|\vecYleft|, |\vecYright|) \\
                              & \leq & I(\vecX;\vecY^\star) + H(|\vecYleft|) +  H(|\vecYright|) \\ 
                              & \leq & I(\vecX;\vecY^\star) + 2 \log_2 (N_0) \; ,
          \end{IEEEeqnarray*}
          because both $|\vecYleft|$ and $|\vecYright|$ can take at most $N_0$ different values. Thus, \eqref{eq:dzp_info_bound} holds and the proof is complete.
\end{proof}

\section{Encoding}
\label{sec:encoding}
Suppose for a moment that we were coding not for the IDS channel $W$, but for the DZP channel $W^\star$.
First of all, recall that the channel $W^\star$ accepts a block of length $N_0$ bits. We choose a typically ``large'' $N_0$. However, instead of only sending a single block of length $N_0$, we send $\Phi$ such blocks, denoted $\vecx(1),\vecx(2),\ldots,\vecx(\Phi)$. The important point to note is the output: denote the output of $W^\star$ corresponding to $\vecx(i)$ as $\vecy^\star(i)$. We assume that the output corresponding to the above input is $(\vecy^\star(1),\vecy^\star(2),\ldots,\vecy^\star(\Phi))$, as opposed to $\vecy^\star(1) \concat \vecy^\star(2) \concat \cdots \concat \vecy^\star(\Phi)$. That is, we assume that the output blocks corresponding to the input blocks are \emph{punctuated}. Namely, given the output corresponding to $\Phi$ blocks, we can distinguish the output corresponding to input block $\vecx(i)$. This is in stark contrast to $W$, in which no such punctuation is given.

For this setting, one can both encode and decode using polar codes; this is very similar to what was done in~\cite{TPFV:20a} with the DZP channel playing the role of the block-TDC channel. Given the information symbols and frozen indices, the information symbols are mapped to a polar codeword of length $\Phi N_0$ using polar encoding. Also, extending the ideas in \cite{Davey_2001,TPFV:20a}, we can build a trellis for calculating the joint probability of $\vecx$ being the input to $W^\star$ and $\vecy^\star$ being the output (building such a trellis involves the use of $\Ppadleft$ and $\Ppadright$). Finally, using Lemma~\ref{eq:IDS_DZP_mutualInformation} and essentially the same proof as \cite{TPFV:20a}, this coding scheme can approach the capacity of the IDS channel $W$. Due to lack of space, we do not go into further details.

Our coding scheme for the IDS channel $W$ consists of two phases. In the first phase, we produce the blocks $\vecx(1),\vecx(2),\ldots,\vecx(\Phi)$ by taking the whole polar codeword and adding commas to separate into blocks of length $N_0$. Then, we imagine these blocks being transmitted over $W^\star$. In the second phase, we add guard bands (defined shortly) between the above blocks. The result is a long codeword that is transmitted over the channel $W$. Loosely speaking, the purpose of the guard bands is to allow the decoder to simulate the operation of $W^\star$ on the blocks $\vecx(1),\vecx(2),\ldots,\vecx(\Phi)$, even though we are in fact transmitting over the channel $W$.

Denote $N = 2^n$, $N_0 = 2^{n_0}$ and $\Phi=2^{n_1} = 2^{n-n_0}$, where $n_0 = \lceil n \nu \rceil$ and $\nu$ was fixed in Theorem~\ref{theo:main}. Let
\begin{equation}
    \label{eq:vecxAsConcatenationOfBlocks}
    \vecx = \vecx(1) \concat \vecx(2) \concat \cdots \concat \vecx(\Phi)
\end{equation}
be a vector of length $N$, consisting of blocks $\vecx(i)$, each of length $N_0$. We denote by $g(\vecx)$ the result of adding guard bands to $\vecx$. For this, let us denote $\vecx = \xI \concat \xII$, where $\xI$ and $\xII$ are the left and right halves of $\vecx$, each of length $N/2$.

\begin{align}
    \label{eq:guardBand}
    g(\vecx) &\triangleq
    \begin{cases}
    \vecx & \text{if }n \leq n_0
    \\
        g(\xI) \concat \vecg_n \concat g(\xII) & \text{if }n> n_0, 
    \end{cases} 
\end{align}
where $\vecg_n$ is termed the guard band and defined as follows. Denote by $\veczero(\ell)$ and $\vecone(\ell)$ a string of $\ell$ consecutive `$0$' symbols and a string of $\ell$ consecutive `$1$' symbols, respectively. Let
\[
     \ell_n \triangleq 2^{\floor{(1-\xi) (n-1)}} \label{eq:ln} \; ,
\]
\noindent where $\xi \in (0,1/2)$ is a `small' constant determined by the difference between $\nu$ and $\nu'$ in Theorem~\ref{theo:main}. Then,
\begin{equation}
    \label{eq:vecgnBlocks}
    \vecg_n \triangleq \underbrace{\veczero(\ell_{n_0})}_{\vecg_n^{\mathrm{left}}} \concat \overbrace{\underbrace{\vecone(\ell_n)}_{\vecg_n^{\mathrm{midleft}}} \concat \underbrace{\vecone(\ell_n)}_{\vecg_n^{\mathrm{midright}}}}^{\vecg_n^{\mathrm{mid}}} \concat \underbrace{\veczero(\ell_{n_0})}_{\vecg_n^{\mathrm{right}}} \; .
\end{equation}
We note that $\vecg_n^{\mathrm{left}}$ and $\vecg_n^{\mathrm{right}}$ are not, in fact, functions of $n$.

\section{Decoding}
\label{sec:decoding}
We now consider two settings for decoding. In both settings, a vector $g(\vecx)$ is transmitted over the IDS channel $W$, and the corresponding output is $\vecy$. Both settings differ only in their preliminary step, which parses the received vector $\vecy$ into $\Phi$ sub-vectors. In the first setting,  which we call ``genie parsing'', an all-knowing genie receives the output $\vecy$ and adds commas in certain appropriate places. Recall from (\ref{eq:vecxAsConcatenationOfBlocks}) that $\vecx$ is comprised of $\Phi$ blocks. After adding commas to the output, the genie produces for each block $\vecx(i)$ a corresponding output $\vecy^\star(i)$. The result is a series of outputs $\vecy^\star=(\vecy^\star(1), \vecy^\star(2),\ldots,\vecy^{\star}(\Phi))$, where for each $i$, the probability law of $\vecy^{\star}(i)$ given $\vecx^{\star}(i)$ is the DZP channel $W^\star(\vecy^\star(i)|\vecx(i))$. We then use the methods described in \cite{TPFV:20a} to decode $\vecx$ from $\vecy^\star$.

The second setting is called ``Aladdin parsing''. As before, $g(\vecx)$ is transmitted and $\vecy$ is received. The goal of Aladdin is to produce the same sequence $\vecy^\star$ as the genie. Since Aladdin is a mere mortal, he does not have the knowledge required to guarantee that he will add commas in the appropriate places.

This raises the question, ``Why does the genie output have dirty zero-padding?''.
An all-knowing genie could produce the IDS output sequences $\vecy(1),\ldots,\vecy(\Phi)$.
But, our genie chooses a weaker strategy (based on an i.i.d.\ dither sequence) so that Aladdin can hope to match the genie's parsing by making use of the guard bands.
Thus, we will show that Aladdin can succeed in producing $\vecy^\star$ with very high probability.


\subsection{Genie parsing}
Recall from (\ref{eq:guardBand}) that the codeword we transmit is comprised of blocks $\vecx(i)$, $1 \leq i \leq \Phi$, separated by guard bands. Let $\vecg(i)$ denote the guard band between $\vecx(i)$ and $\vecx(i+1)$, where $\vecg(i)$ equals $\vecg_{n'}$ for some $n_0 < n' \leq n$ which is a function of $i$. Now we recall from (\ref{eq:vecgnBlocks}) that each guard band $\vecg(i)$ is comprised of four blocks, which we denote
$\vecg^{\mathrm{left}}(i)$, $\vecg^{\mathrm{midleft}}(i)$, $\vecg^{\mathrm{midright}}(i)$, and $\vecg^{\mathrm{right}}(i)$, for $1 \leq i \leq \Phi - 1$. The genie receives the output $\vecy$, and adds commas between all the blocks because the genie can distinguish which substring of $\vecy$ equals $\vecy(i)$, the output corresponding to $\vecx(i)$. It can also distinguish which part of $\vecy$ corresponds to $\vecg^{\square}(i)$, where $\square \in \{ \mathrm{left}, \mathrm{midleft}, \mathrm{midright}, \mathrm{right} \}$. We denote the relevant part of $\vecy$ as $\vecd^{\square}(i)$, where ``d'' stands for ``dirty''.

Recall from (\ref{eq:ystar_yleftMiddleRight}) that, in order to return
\[
    \vecy^\star(i) = \vecyleft(i) \concat \vecy(i) \concat \vecyright(i) \; ,
\]
the DZP channel $W^\star$ must pad $\vecy(i)$ from the left and right. This padding is according to the probability distributions $\Ppadleft$ and $\Ppadright$, which have yet to be specified. Now, we define how the genie produces $\vecyleft(i)$ and $\vecyright(i)$ from the following punctuated segment of $\vecy^\star$,
\[
    \vecd^{\mathrm{midright}}(i-1), \vecd^{\mathrm{right}}(i-1), \vecy(i), \vecd^{\mathrm{left}}(i), \vecd^{\mathrm{midleft}}(i) \; .
\]
In doing this, we \emph{implicitly define} $\Ppadleft$ and $\Ppadright$ as the distributions of $\vecyleft(i)$ and $\vecyright(i)$. Before we proceed, we encourage the reader to validate the following points: $\vecyleft(i)$ and $\vecyright(i)$ are independent and their distributions
\begin{itemize}
    \item depend on the channel statistics of $W$;
    \item are not functions of $i$;
    \item are not functions of either $\vecx(i)$ nor $\vecy(i)$.
\end{itemize}

Consider an index $1 < i < \Phi$ (not the first nor last block). 
We now describe how $\vecyleft(i)$ depends on $\vecd^{\mathrm{midright}}(i-1)$ and $\vecd^{\mathrm{right}}(i-1)$. The description of how $\vecyright(i)$ depends on $\vecd^{\mathrm{left}}(i)$ and $\vecd^{\mathrm{midleft}}(i)$ is given by reflection symmetry. Before diving into the details, we emphasize that $\vecyleft(i)$ will consist of some suffix of $\vecd^{\mathrm{right}}(i-1)$. Since $\vecd^{\mathrm{right}}(i-1)$ is the result of sending a string of zeros, $\vecg^{\mathrm{right}}(i-1)$, we will indeed pad $\vecy(i)$ with a string of ``dirty zeros''. Here are the details.

\begin{enumerate}
    \item The genie considers the length of $\vecd^{\mathrm{midright}}(i-1)$.
        \begin{enumerate}
            \item \label{it:geniebad_shortones} If it is less than $h$, where
        \[
            h \triangleq \frac{\ell_{n_0} \cdot \beta}{4} \; ,
        \]
                the genie pads $\vecd^{\mathrm{midright}}(i-1)$ from the left. This is done by conceptually drawing a string from $p_{Y|X}(\cdot|1) $ and  prepending $\vecd^{\mathrm{midright}}(i-1)$ with the string. In practice, we use independent random variables to simulate $p_{Y|X}$. This is repeated until the length of $\vecd^{\mathrm{midright}}(i-1)$ is at least $h$.
        \end{enumerate}
    \item The genie considers the concatenated string
        \[
            \vecz = \vecd^{\mathrm{midright}}(i-1) \concat \vecd^{\mathrm{right}}(i-1) \; .
        \]
        It places a window of length $h$ at the right side of $\vecd^{\mathrm{midright}}(i-1)$. That is, the window starts at $z_s$ and ends at $z_e$, where $e = |\vecd^{\mathrm{midright}}(i-1)|$ and $s = e - h$.
    \item The genie draws a random integer $\rho = \rho^{\mathrm{left}}(i)$ uniformly from $\{1,2,\ldots,h\}$. We think of $\rho$ as a ``random dither''.
    \item The genie shifts the window by $\rho$ positions right. That is, $\rho$ is added to both $s$ and $e$.
        \begin{enumerate}
            \item \label{it:geniebad_shortFirstWindow} If the window falls off $\vecz$, that is, if $e > |\vecz|$, the genie chooses $\vecyleft = \epsilon$, the empty string. Otherwise, the genie continues to the next step.
        \end{enumerate}
    \item The genie counts the number of `$0$' symbols in the window (i.e., the cardinality of $\{s \leq i \leq e | z_i = 0\}$), 
    \item If the count is at least $h/2$, the genie sets $\vecyleft$ to the remainder of $\vecz$ after deleting $z_1$  to $z_e$ and then finishes by returning $\vecyleft$. 
    \item Otherwise, the genie shifts the window one frame right. That is, $h$ is added to both $s$ and $e$.
        \begin{enumerate}
            \item \label{it:geniebad_shortSecondWindow} If the window falls off $\vecz$, that is, if $e > |\vecz|$, the genie chooses $\vecyleft = \epsilon$, the empty string. Otherwise, the genie continues to the next step.
        \end{enumerate}
    \item We set $\vecyleft$ to the remainder of $\vecz$ after deleting $z_1$  to $z_e$ and then finish by returning $\vecyleft$.
\end{enumerate}

The rationale of above procedure will become clearer after we explain Aladdin's algorithm. For now, note that it is well defined and does indeed satisfy the requirements stated previously. The reader should also keep in mind that getting into a substep is `bad' with respect to Aladdin's ability to mimic the genie. That is, we would like the probability of entering substeps \ref{it:geniebad_shortones}, \ref{it:geniebad_shortFirstWindow}, or \ref{it:geniebad_shortSecondWindow} to be `small'.

We must address one last point: how the paddings for blocks $i=1$ and $i=\Phi$ are handled. The right padding for $i=1$ and the left padding for $i=\Phi$ are as above. The left padding for $i=1$ and the right padding for $i=\Phi$ (i.e., the edge padding) are given by random sampling from $\Ppadleft$ and $\Ppadright$. These choices are coupled so that the genie and Aladdin always choose the same realizations for these edge paddings.

\subsection{Aladdin parsing}
Aladdin receives the vector $\vecy$, and as a preliminary step adds the edge padding on the left and right (both of which are coupled to the genie's choices).
We denote the resulting vector $\vecy_{\textrm{pad}}$. 
Aladdin's parsing is given by $\mimicGenie(\vecy_{\textrm{pad}},n)$ where the recursive function $\mimicGenie(\vecz,m)$ is defined by:
\begin{itemize}
    \item If $m = n_0$, Aladdin returns $\vecz$. Otherwise,
    \item Aladdin builds $\zI$ and $\zII$ as follows, and then return $\mimicGenie(\zI,m-1)$ (which contains $2^{m-n_0-1}$ vectors), followed by $\mimicGenie(\zII,m-1)$ (which also contains $2^{m-n_0-1}$ vectors). Namely, Aladdin returns $2^{m-n_0}$ vectors.
    \item Let $\zI$ be the left half of $\vecz$ and $\zII$ be the right half of $\vecz$ (in case $|\vecz|$ is odd, $\zI$ is longer than $\zII$, by one bit). Then, Aladdin trims $\zI$ and $\zII$.
    \item Trimming $\zI$ is the ``mirror image'' of trimming $\zII$, which is done as follows:
        \begin{itemize}
    \item Aladdin places a window of length $h$ at the start of $\zII$. That is, the window starts at $s=0$ and ends at $e=h$. If in any stage of the algorithm the window ``falls off $\zII$'', meaning that $e > |\zII|$, Aladdin declares failure.
    \item Aladdin randomly and uniformly chooses a random dither $\rho'$ uniformly from $\{1,2,\ldots,h\}$.
    \item Aladdin shifts the window $\rho$ positions right by adding $\rho$ to both $s$ and $e$.
    \item Aladdin checks if the window contains at least $h/2$ `$0$' symbols. If it does, the process continues to the next step. If it does not, Aladdin moves the window one frame to the right by adding $h$ to both $s$ and $e$, and then repeats this bullet point.
    \item Aladdin trims $\zII$ by removing the first $e$ symbols.
\end{itemize}
\end{itemize}

\subsection{Connections between Aladdin and genie parsing}
Let us compare Aladdin's parsing to that of the genie. First of all, note that both decoders use $2\Phi-2$ random dithers during their runs (recall that we've denoted a random dither as $\rho$ for the genie and $\rho'$ for Aladdin). To help Aladdin match the genie, these dithers can be \emph{coupled}. That is, for each choice of random dithers the genie makes we couple a unique choice of random dithers that Aladdin makes.
We also couple the choices of  
The utility of this coupling is that, with high probability, both the genie and Aladdin return the same vector of DZP channel outputs $(\vecy^\star(i))_{i=1}^\Phi$.

Before describing the coupling, we note that, given the DZP parsing, the proof of Theorem~\ref{theo:main} follows essentially the same steps as the main result in~\cite{TPFV:20a}.
The steps will be detailed in a forthcoming longer version of this paper \cite{longer}.


For brevity, we explain the coupling in terms of just two dithers. Consider the $\rho$ that the genie chooses for padding $\vecx(i)$ from the left, for $i=\Phi/2+1$. We couple this $\rho$ with the $\rho'$ Aladdin chooses in the topmost part of the recursion, for producing $\zII$. Typically, the midpoint of $\vecz$ is in $\vecd^{\mathrm{midright}}(i-1)$ or $\vecd^{\mathrm{midleft}}(i-1)$. Aladdin adds the dither $\rho$ to the window, and then shift it one frame right, until the number of zeros is large enough. By Lemma~\ref{lemm:windowAdversary} we conclude that the number of zeros will typically not be large enough, until the window contains some part of $\vecd^{\mathrm{right}}(i-1)$. Consider the first time this happens, and set the Genie's $\rho$ to the number of symbols from $\vecd^{\mathrm{right}}(i-1)$. One can think of the genie as having a `shortcut' that avoids the previous steps Aladdin took. Both Aladdin and the Genie have the same window, at this point. If it contains enough zero symbols, they return the same padding. If it does not, they both shift it one frame right. At this stage, typically, the window will only contain symbols from $\vecd^{\mathrm{right}}(i-1)$, ``dirty zeros''. Thus, again by Lemma~\ref{lemm:windowAdversary}, Aladdin will typically stop at this stage, as the genie always does, and both will return the same left padding.

\appendix

\subsection{Proof of Lemma~\ref{lemm:windowAdversary}}

    Let $0 < \delta < 1$ be a constant, dependent on the channel, that we will fix later. Recall that the expected length of an output corresponding to a single input is $\beta$, and define
    \begin{equation}
        \label{eq:kHoeffding}
        k = \lceil h(1-\delta)/\beta \rceil \; .
    \end{equation}
    We can think of the output $\vecY$ as being manufactured as follows. We input the first bit ($x$) to the channel, then input $k$ more bits (all $x$), and if the output up to this point has length less than $h+1$, inputting however many bits (all $x$) are needed in order for the output length to be at least $h+1$. Then, we possibly remove the first bit of the output, and set $\vecY$ to the first $h$ bits. We will call the output corresponding to the $k$ input bits after the first input bit the \emph{essential output}. Our proof hinges on showing that the following two events occur with very high probability: 1) all of the essential output is contained in $\vecY$, and 2) the essential output has more than $h/2$ bits equal to $x$.

    Denote by $\vecZ = \vecZ_1 \concat \vecZ_2\concat \cdots \concat \vecZ_k$ the essential output, where $\vecZ_i$ is the output corresponding to input bit $i+1$. We find it easier to define bad events: event $A$ occurs if the length of $\vecZ$ is at least $h-1$; event $B$ occurs if $\vecZ$ contains at most $h/2$ bits equal to $x$. Clearly, if neither $A$ nor $B$ occur, the above good events occur\footnote{Recall that the output due to the first input bit has length at most $2$.}, and we correctly guess $x$.

    Now, let us choose
    \begin{equation}
        \label{eq:deltaHalfGamma}
        \delta = \min\left\{\frac{\gamma }{2 \cdot \alpha_{0|0}},\frac{\gamma }{2 \cdot \alpha_{1|1}},\frac{1}{2}\right\} \; ,
    \end{equation}
    where $\gamma$ is defined in (\ref{eq:gamma}). Note that, indeed, $0 < \delta < 1$. Let
    \begin{equation}
        \label{eq:hzeroprime}
        h_0' = \frac{2(\beta+1)}{\delta} - 1 > 1 \; ,
    \end{equation}
    where the inequality follows from (\ref{eq:deltaHalfGamma}). Assume that $h \geq h_0'$.
    
   By Hoeffding's bound \cite[Theorem 4.12]{MitzenmacherUpfal:17b},
    \begin{multline*}
        P(A) \leq 2 e^{\frac{-2 k \left( \frac{h-1}{k} - \beta \right)^2 }{4} } =  2 e^{\frac{-k \left( \frac{h-1}{k} - \beta \right)^2 }{2} } \\
             \eqann[\leq]{a}  2 e^{\frac{-\left(\frac{h(1-\delta)}{\beta} \right) \cdot  \left( \frac{h-1}{k} - \beta \right)^2 }{2} } \; ,
\end{multline*}
where \eqannref{a} follows from (\ref{eq:kHoeffding}). Noting the squared term on the RHS, we next show that
\[
    \frac{h-1}{k} >  \frac{h-1}{\frac{h(1-\delta)}{\beta} + 1} \geq \frac{\beta}{1-\delta/2} > \beta \; .
\]
Indeed, the first inequality follows from (\ref{eq:kHoeffding}), noting that $h-1$ is positive, since $h \geq h_0' > 1$; the second follows from (\ref{eq:hzeroprime}), recalling that $h \geq h_0'$; the third follows since $\delta > 0$. Thus, from the above two displayed equations we conclude that
\begin{equation}
    \label{eq:PAbound}
    P(A) < 2 e^{\frac{-h\left(\frac{(1-\delta)}{\beta} \right) \cdot  \left( \frac{\beta}{1-\delta/2} - \beta \right)^2 }{2} } \; .
\end{equation}

For event $B$, we use Hoeffding's inequality and (\ref{eq:kHoeffding}) to get
    \begin{multline*}
        P(B) \leq 2 e^{ \frac{-2 k \left( \alpha_{x|x} - \frac{h}{2k} \right)^2 }{4}  } = 2 e^{ \frac{-k \left( \alpha_{x|x} - \frac{h}{2k} \right)^2 }{2}  } \\
             \leq 2 e^{ \frac{-\left(\frac{h(1-\delta)}{\beta} \right) \cdot  \left( \alpha_{x|x} - \frac{h}{2k} \right)^2 }{2}  } \; .
    \end{multline*}
Focusing on the squared term on the RHS, we now prove that
\[
    \alpha_{x|x} \geq \frac{\beta/2}{1-\delta} \geq \frac{h}{2k} \; .
\]
The second inequality follows easily from (\ref{eq:kHoeffding}). For the first inequality, first recall that $\beta = \alpha_{x|x} + \alpha_{1-x|x}$, by (\ref{eq:beta}). Thus, it suffices to prove that
\(
    \frac{\alpha_{x|x} - \alpha_{1-x|x}}{2} \geq \delta \alpha_{x|x}  \; .
\)
By (\ref{eq:gamma}), this will follow if we prove that $\gamma \geq \delta \alpha_{x|x}$, which holds by (\ref{eq:deltaHalfGamma}). Thus, from the above two displayed equations we conclude that
\begin{IEEEeqnarray*}{rCl}
    P(B) &\leq& 2 e^{ \frac{-\left(\frac{h(1-\delta)}{\beta} \right) \cdot  \left( \alpha_{x|x} - \frac{\beta/2}{1-\delta} \right)^2 }{2}  } \\
         &= & 2 e^{ \frac{-\left(\frac{h(1-\delta)}{\beta} \right) \cdot  \left( \frac{ \frac{\alpha_{x|x} - \alpha_{1-x|x}}{2} - \delta \alpha_{x|x}}{1-\delta} \right)^2  }{2}} \; .
\end{IEEEeqnarray*}
Slightly refining the above arguments, we get from (\ref{eq:beta}), (\ref{eq:gamma}), and (\ref{eq:deltaHalfGamma}), that  
\[
    \frac{\alpha_{x|x} - \alpha_{1-x|x}}{2} \geq \gamma \geq \frac{\gamma}{2} \geq \delta \alpha_{x|x} \; .
\]
Thus, from the above two displayed equations we get that
\begin{equation}
    P(B) \leq 2 e^{ \frac{-\left(\frac{h(1-\delta)}{\beta} \right) \cdot  \left( \frac{ \gamma - \gamma/2 }{1-\delta} \right)^2  }{2}} 
    =  2 e^{ \frac{-h\left(\frac{1-\delta}{\beta} \right) \cdot  \left( \frac{ \gamma/2 }{1-\delta} \right)^2  }{2}} \; .  \label{eq:PBbound}
\end{equation}


In light of (\ref{eq:PAbound}) and (\ref{eq:PBbound}), let us define
\[
    c_0 \triangleq \frac{1}{2} \min\left\{\underbrace{\scriptstyle{\frac{\left(\frac{(1-\delta)}{\beta} \right) \cdot  \left( \frac{\beta}{1-\delta/2} - \beta \right)^2 }{2}}}_{c_0'},\underbrace{\scriptstyle{\frac{\left(\frac{1-\delta}{\beta} \right) \cdot  \left( \frac{ \gamma/2 }{1-\delta} \right)^2  }{2}}}_{c_0''} \right\}
\]
and take $h_0 \geq h_0'$ large enough such that for all $h \geq h_0$,
\[
    2 e^{-h \cdot  c_0'} + 2 e^{-h \cdot c_0''} \leq e^{-h \cdot c_0} \; .
\]
That is, $h_0 \triangleq \max\{ h_0', \ln(4)/c_0  \}$.

\IEEEtriggeratref{12}

\twobibs{
\bibliographystyle{IEEEtran} 
\bibliography{mybib.bib} 

\begin{thebibliography}{10}
\providecommand{\url}[1]{#1}
\csname url@samestyle\endcsname
\providecommand{\newblock}{\relax}
\providecommand{\bibinfo}[2]{#2}
\providecommand{\BIBentrySTDinterwordspacing}{\spaceskip=0pt\relax}
\providecommand{\BIBentryALTinterwordstretchfactor}{4}
\providecommand{\BIBentryALTinterwordspacing}{\spaceskip=\fontdimen2\font plus
\BIBentryALTinterwordstretchfactor\fontdimen3\font minus
  \fontdimen4\font\relax}
\providecommand{\BIBforeignlanguage}[2]{{%
\expandafter\ifx\csname l@#1\endcsname\relax
\typeout{** WARNING: IEEEtran.bst: No hyphenation pattern has been}%
\typeout{** loaded for the language `#1'. Using the pattern for}%
\typeout{** the default language instead.}%
\else
\language=\csname l@#1\endcsname
\fi
#2}}
\providecommand{\BIBdecl}{\relax}
\BIBdecl

\bibitem{Gallager_1961}
R.~Gallager, ``Sequential decoding for binary channels with noise and
  synchronization errors,'' 1961, {L}incoln Lab Group Report.

\bibitem{Dobrushin_1967}
R.~L. Dobrushin, ``Shannon's theorems for channels with synchronization
  errors,'' \emph{Problemy Peredachi Informatsii}, vol.~3, no.~4, pp. 18--36,
  1967.

\bibitem{Davey_2001}
M.~C. Davey and D.~J. MacKay, ``Reliable communication over channels with
  insertions, deletions, and substitutions,'' \emph{IEEE Trans.\ Inform.\
  Theory}, vol.~47, no.~2, pp. 687--698, 2001.

\bibitem{Mitzenmacher_2009}
M.~Mitzenmacher, ``A survey of results for deletion channels and related
  synchronization channels,'' \emph{Probability Surveys}, vol.~6, pp. 1--33,
  2009.

\bibitem{Kanoria_2013}
Y.~Kanoria and A.~Montanari, ``Optimal coding for the binary deletion channel
  with small deletion probability,'' \emph{IEEE Trans.\ Inform.\ Theory},
  vol.~59, no.~10, pp. 6192--6219, 2013.

\bibitem{Rahmati_2015}
M.~Rahmati and T.~M. Duman, ``Upper bounds on the capacity of deletion channels
  using channel fragmentation,'' \emph{IEEE Trans.\ Inform.\ Theory}, vol.~61,
  no.~1, pp. 146--156, 2015.

\bibitem{Castiglione_2015}
J.~Castiglione and A.~Kavcic, ``Trellis based lower bounds on capacities of
  channels with synchronization errors,'' in \emph{Information Theory
  Workshop}.\hskip 1em plus 0.5em minus 0.4em\relax Jeju, South Korea: IEEE,
  2015, pp. 24--28.

\bibitem{Cheraghchi_2019}
M.~Cheraghchi, ``Capacity upper bounds for deletion-type channels,''
  \emph{Journal of the ACM (JACM)}, vol.~66, no.~2, p.~9, 2019.

\bibitem{Tal-isit19}
I.~Tal, H.~D. Pfister, A.~Fazeli, and A.~Vardy, ``Polar codes for the deletion
  channel: Weak and strong polarization,'' in \emph{Proc.\ IEEE Int.\ Symp.\
  Inform.\ Theory}, 2019, pp. 1362--1366.

\bibitem{TPFV:20a}
------, ``Polar codes for the deletion channel: weak and strong polarization,''
  2020, preprint arXiv:1904.13385v2.

\bibitem{Wang_2014}
R.~Wang, R.~Liu, and Y.~Hou, ``Joint successive cancellation decoding of polar
  codes over intersymbol interference channels,'' 2014, preprint
  arXiv:1404.3001.

\bibitem{Wang_2015}
R.~Wang, J.~Honda, H.~Yamamoto, R.~Liu, and Y.~Hou, ``Construction of polar
  codes for channels with memory,'' in \emph{2015 IEEE Information Theory
  Workshop}, October 2015, pp. 187--191.

\bibitem{Thomas_2017}
E.~K. Thomas, V.~Y.~F. Tan, A.~Vardy, and M.~Motani, ``Polar coding for the
  binary erasure channel with deletions,'' \emph{IEEE Communications Letters},
  vol.~21, no.~4, pp. 710--713, April 2017.

\bibitem{Tian_2017}
K.~Tian, A.~Fazeli, A.~Vardy, and R.~Liu, ``Polar codes for channels with
  deletions,'' in \emph{55th Annual Allerton Conference on Communication,
  Control, and Computing}, 2017, pp. 572--579.

\bibitem{Tian_2018}
K.~Tian, A.~Fazeli, and A.~Vardy, ``Polar coding for deletion channels: Theory
  and implementation,'' in \emph{IEEE International Symposium on Information
  Theory}, 2018, pp. 1869--1873.

\bibitem{Tian_it2018}
------, ``Polar coding for deletion channels,'' 2018, submitted to IEEE Trans.\
  Inform.\ Theory.

\bibitem{Li_2019}
Y.~Li and V.~Y.~F. Tan, ``On the capacity of channels with deletions and
  states,'' 2019, preprint arXiv:1911.04473.

\bibitem{SasogluTal:18a}
E.~\c{S}a\c{s}o\u{g}lu and I.~Tal, ``Polar coding for processes with memory,''
  \emph{IEEE Trans.\ Inform.\ Theory}, vol.~65, no.~4, pp. 1994--2003, April
  2019.

\bibitem{ShuvalTal:18a}
B.~Shuval and I.~Tal, ``Universal polarization for processes with memory,''
  2018, preprint arXiv:1811.05727v1.

\bibitem{Shuval_Tal_Memory_2017}
------, ``Fast polarization for processes with memory,'' \emph{IEEE Trans.\
  Inform.\ Theory}, vol.~65, no.~4, pp. 2004--2020, April 2019.

\bibitem{longer}
H.~D. Pfister and I.~Tal, ``Polar codes for channels with insertions,
  deletions, and substitutions,'' arXiv preprint in preparation.

\bibitem{MitzenmacherUpfal:17b}
M.~Mitzenmacher and E.~Upfal, \emph{Probability and Computing: Randomizition
  and Probabilistic Techniques in Algorithms and Data Analysis}, 2nd~ed.\hskip
  1em plus 0.5em minus 0.4em\relax Cambridge, UK: Cambridge University Press,
  2005.

\end{thebibliography}
}
{

}

\end{document}